\newtheorem{definition}{Definition}
\newtheorem{proposition}[definition]{Proposition}
\newtheorem{lemma}[definition]{Lemma}
\newtheorem{theorem}[definition]{Theorem}
\newtheorem{corollary}[definition]{Corollary}
\def\squareforqed{\hbox{\rlap{$\sqcap$}$\sqcup$}}
\def\qed{\ifmmode\squareforqed\else{\unskip\nobreak\hfil
\penalty50\hskip1em\null\nobreak\hfil\squareforqed
\parfillskip=0pt\finalhyphendemerits=0\endgraf}\fi}
\def\endenv{\ifmmode\;\else{\unskip\nobreak\hfil
\penalty50\hskip1em\null\nobreak\hfil\;
\parfillskip=0pt\finalhyphendemerits=0\endgraf}\fi}
\newenvironment{proof}{\noindent \textbf{{Proof~} }}{\qed}
\newenvironment{remark}{\noindent \textbf{{Remark~}}}{\qed}
\newenvironment{example}{\noindent \textbf{{Example~}}}{\qed}
\mathchardef\ordinarycolon\mathcode`\:
\def\vcentcolon{\mathrel{\mathop\ordinarycolon}}
\newcommand{\nc}{\newcommand}
\nc{\rnc}{\renewcommand}
\nc{\beg}{\begin{equation}}
\nc{\eeq}{{\end{equation}}}
\nc{\beqa}{\begin{eqnarray}}
\nc{\eeqa}{\end{eqnarray}}
\nc{\lbar}[1]{\overline{#1}}
\nc{\bra}[1]{\langle#1|}
\nc{\ket}[1]{|#1\rangle}
\nc{\ketbra}[2]{|#1\rangle\!\langle#2|}
\nc{\braket}[2]{\langle#1|#2\rangle}
\nc{\proj}[1]{| #1\rangle\!\langle #1 |}
\nc{\avg}[1]{\langle#1\rangle}
\nc{\Rank}{\operatorname{Rank}}
\nc{\smfrac}[2]{\mbox{$\frac{#1}{#2}$}}
\nc{\tr}{\operatorname{Tr}}
\nc{\ox}{\otimes}
\nc{\dg}{\dagger}
\nc{\dn}{\downarrow}
\nc{\cA}{{\cal A}}
\nc{\cB}{{\cal B}}
\nc{\cC}{{\cal C}}
\nc{\cD}{{\cal D}}
\nc{\cE}{{\cal E}}
\nc{\cF}{{\cal F}}
\nc{\cG}{{\cal G}}
\nc{\cH}{{\cal H}}
\nc{\cI}{{\cal I}}
\nc{\cJ}{{\cal J}}
\nc{\cK}{{\cal K}}
\nc{\cL}{{\cal L}}
\nc{\cM}{{\cal M}}
\nc{\cN}{{\cal N}}
\nc{\cO}{{\cal O}}
\nc{\cP}{{\cal P}}
\nc{\cQ}{{\cal Q}}
\nc{\cR}{{\cal R}}
\nc{\cS}{{\cal S}}
\nc{\cT}{{\cal T}}
\nc{\cX}{{\cal X}}
\nc{\cY}{{\cal Y}}
\nc{\cZ}{{\cal Z}}
\nc{\cW}{{\cal W}}
\nc{\csupp}{{\operatorname{csupp}}}
\nc{\qsupp}{{\operatorname{qsupp}}}
\nc{\var}{{\operatorname{var}}}
\nc{\rar}{\rightarrow}
\nc{\lrar}{\longrightarrow}
\nc{\polylog}{{\operatorname{polylog}}}
\nc{\wt}{{\operatorname{wt}}}
\nc{\av}[1]{{\left\langle {#1} \right\rangle}}
\nc{\supp}{{\operatorname{supp}}}
\def\e{\epsilon}
\def\G{\Gamma}
\def\U{\Upsilon}
\def\O{\Omega}
\nc{\RR}{{{\mathbb R}}}
\nc{\CC}{{{\mathbb C}}}
\nc{\FF}{{{\mathbb F}}}
\nc{\NN}{{{\mathbb N}}}
\nc{\ZZ}{{{\mathbb Z}}}
\nc{\PP}{{{\mathbb P}}}
\nc{\QQ}{{{\mathbb Q}}}
\nc{\UU}{{{\mathbb U}}}
\nc{\EE}{{{\mathbb E}}}
\nc{\id}{{\operatorname{id}}}
\nc{\CHSH}{{\operatorname{CHSH}}}
\newcommand{\Op}{\operatorname}
\nc{\be}{\begin{equation}}
\nc{\ee}{{\end{equation}}}
\nc{\bea}{\begin{eqnarray}}
\nc{\eea}{\end{eqnarray}}
\nc{\Hom}[2]{\mbox{Hom}(\CC^{#1},\CC^{#2})}
\nc{\rU}{\mbox{U}}
\nc{\ob}[1]{#1}
\nc{\SEP}{{\text{SEP}}}
\nc{\NS}{{\text{NS}}}
\nc{\LOCC}{{\text{LOCC}}}
\nc{\PPT}{{\text{PPT}}}
\nc{\EXT}{{\text{EXT}}}
\nc{\Sym}{{\operatorname{Sym}}}
\nc{\ERLO}{{E_{\text{r,LO}}}}
\nc{\ERLOCC}{{E_{\text{r,LOCC}}}}
\nc{\ERPPT}{{E_{\text{r,PPT}}}}
\nc{\ERLOCCinfty}{{E^{\infty}_{\text{r,LOCC}}}}
\nc{\Aram}{{\operatorname{\sf A}}}
\begin{document}
\title{A semidefinite programming upper bound of quantum capacity}
\author{\IEEEauthorblockN{Xin Wang\IEEEauthorrefmark{1},
Runyao Duan\IEEEauthorrefmark{1}\IEEEauthorrefmark{2}}
\IEEEauthorblockA{\IEEEauthorrefmark{1}Centre for Quantum Computation and Intelligent Systems\\ Faculty of Engineering and Information Technology\\
University of Technology Sydney (UTS),
NSW 2007, Australia}
\IEEEauthorblockA{\IEEEauthorrefmark{2}UTS-AMSS Joint Research Laboratory for Quantum Computation and Quantum Information Processing\\ Academy of Mathematics and Systems Science\\ Chinese Academy of Sciences, Beijing 100190, China}
Email: xin.wang-8@student.uts.edu.au, runyao.duan@uts.edu.au}

\maketitle

\begin{abstract}
Recently the power of positive partial transpose preserving (PPTp) and no-signalling (NS) codes in quantum communication has been studied. We continue with this line of research and show that the NS/PPTp/NS$\cap$PPTp codes assisted zero-error quantum  capacity depends only on the non-commutative bipartite graph of the channel and the one-shot case can be computed efficiently by semidefinite programming (SDP). As an example, the activated PPTp codes assisted zero-error quantum  capacity is carefully studied. We then present a general SDP upper bound $Q_\Gamma$ of quantum capacity and show it is always smaller than or equal to the  ``Partial transposition bound'' introduced by Holevo and Werner, and the inequality could be strict. This upper bound is found to be additive, and thus is an upper bound of the potential PPTp assisted quantum capacity as well. We further demonstrate that $Q_\Gamma$ is strictly better than several previously known upper bounds for an explicit class of quantum channels. Finally, we show that $Q_\Gamma$ can be used to bound the super-activation of quantum capacity.
\end{abstract}

\IEEEpeerreviewmaketitle

\section{Introduction}
A fundamental problem in quantum information theory is to determine the quantum capacity of quantum channels. The quantum capacity of a noisy quantum channel is the highest rate at which it can convey quantum information reliably over asymptotically many uses of the channel. Quantum capacity is  complicated to evaluate since it is characterized by a  multi-letter, regularized expression, and it is not even known to be computable \cite{Cubitt2015}. Even for the low dimensional channels such as the qubit depolarizing channel, the quantum capacity remains unknown.

To deal with the intractable problem of determining quantum capacities of channels, assistance such as entanglement or classical communication have been introduced into the capacity problem \cite{Bennett2006,Leung2015c}. Particularly, positive partial transpose preserving (PPTp) and no-signalling (NS) codes assisted quantum capacity has been studied \cite{Leung2015c}, which regards a channel code as a bipartite operation with an encoder belonging to the sender and a decoder belonging to the receiver. 

Given an arbitrary quantum channel, the only known general computable upper bound is the partial transposition bound introduced in  \cite{Holevo2001}. Other known upper bounds \cite{Smith2008a,Sutter2014,Gao2015a,Bruß1998, Cerf2000, Wolf2007, Smith2008b,Tomamichel2015a}  all require specific settings to be tight and computable. For example, the upper bound from no cloning argument \cite{Bruß1998, Cerf2000} only behaves well at very high noise levels. Also, upper bound raised by approximate degradable quantum channels \cite{Sutter2014} can evaluate the quantum capacity of arbitrary channels based on the single-letter capacity and this usually works well just for approximate degradable quantum channels. Thus it is of great interest and significance to find an efficiently computable upper bound for quantum capacity.

Before we present our main results, let us first review some notations and preliminaries. Let $\cN(\rho)=\sum_k E_k\rho E_k^\dag$ be a quantum channel from $\cL(A')$ to $\cL(B)$, where $\sum_k E_k^\dag E_k=\1_{A'}$. The Choi-Jamio\l{}kowski matrix of $\cN$ is given by $J_{AB}=\sum_{ij} \ketbra{i}{j}_A \ox \cN(\ketbra{i}{j}_{A'})=(\text{id}_{A}\ox\cN)\proj{\Phi_{AA'}}$, where $A$ and $A'$ are isomorphic Hilbert spaces with respective orthonormal basis $\{\ket i\}$ and {$\{\ket j\}$}, and $\ket{\Phi_{AA'}}=\sum_k\ket{k_A}\ket{k_{A'}}$ is the unnormalized  maximally-entangled state over $A \ox A'$.
And $K=K(\cN)=\operatorname{span}\{E_k\}$ denotes the Choi-Kraus operator space of $\cN$. 
The coherent information of $\cN$ is given by 
\begin{equation}
\Op{I}_{\textup{\tiny C}}(\cN) 
= \max_{\rho_{A}} \Op{H}(\cN(\rho_{A})) - \Op{H}(\cN^c(\rho_{A}))  ,
\end{equation}
where $\cN^c$ is the complementary channel of $\cN$ and $\Op{H}(\sigma)=-\tr (\sigma \log \sigma)$ denotes
the von~Neumann entropy of a density operator $\sigma$. The work in \cite{Lloyd1997,Shor2002a,Devetak2005a} showed that coherent information of $\cN$ is an achievable rate for quantum communication while the work in \cite{Schumacher1996a,Barnum2000,Barnum1998} showed the  regularized coherent information is also an upper bound on quantum capacity. This establishes that
\begin{equation}
  \Op{Q}(\cN) = \lim_{n\rightarrow\infty} 
  \frac{\Op{I}_{\textup{\tiny C}}(\cN^{\otimes n})}{n} \,.
\label{eq:qcap}
\end{equation}

A general ``code''  is defined as a set of operations performed by the sender Alice and the receiver Bob which can be used to improve the data transmission with the given channel \cite{Leung2015c}. The PPTp codes are those for which the bipartite operation is PPT-preserving. 
A nonzero positive semi-definite operator $E \in \cL(\cX \ox \cY)$ is
said to be a positive
partial transpose operator (or simply PPT) if $E^{T_{\cX}}\geq 0$, where ${T_{\cX}}$ means the partial transpose with respect to the party
$\cX$, i.e., $(\ketbra{ij}{kl})^{T_{\cX}}=\ketbra{kj}{il}$.
A bipartite operation $\Pi:\cL(A_i\ox B_i)\rightarrow \cL(A_o\ox B_o)$ is `PPT-preserving' if it sends any state which is PPT with respect to the Alice/Bob partition to another PPT state. As shown in \cite{Rains2001},
a bipartite operation $\Pi^{A_i\ox B_i\rightarrow A_o\ox B_o}$ is PPT-preserving if and only if its Choi-Jamio\l{}kowski matrix $Z_{A_iB_iA_oB_o}$  is PPT.

The PPT-preserving operations include all operations
that can be implemented by local operations and classical communication (LOCC) and were introduced to study entanglement distillation in an early paper by Rains \cite{Rains2001}. They also include all unassisted and forward-classical-assisted codes introduced in \cite{Leung2015c}.  The no-signalling (NS) codes refer to the bipartite quantum operations with the no-signalling constraints and this kind of codes are also useful in classical zero-error communication \cite{Cubitt2011, Duan2016,Duan2015a}.
Let $\O$ represent $\text{NS, PPTp or NS}\cap\text{PPTp}$ in the rest of the paper.
Given a channel $\cN: \cL(A) \to \cL(B)$ and the $\O$ code of size $k$,  the optimal channel fidelity is given by the following SDP \cite{Leung2015c}:
\begin{equation}\label{PPT prime}
\begin{split}
F^{\O}(\cN, k)&= \max \tr J_{AB}W_{AB} \\ 
& \text{ s.t. }\  0 \leq W_{AB} \leq \rho_A \ox \1_B, \tr \rho_A=1,\\
&\text{PPTp: } -\frac{\rho_A \ox \1_B}{k} \le W_{AB}^{T_{B}} \le  \frac{\rho_A \ox \1_B}{k},\\
& \text{NS: }
\tr_A W_{AB} = \frac{1}{k^2}\1_B.
\end{split}\end{equation}
And the dual SDP is given by
\begin{equation}\begin{split}\label{ppt quantum dual}
F_d^{\O}(\cN, k)&= \min \mu +k^{-2}\tr S_B\\
\text{ s.t. }\  &J_{AB}+(Y_{AB}-V_{AB})^{T_{B}}\le X_{AB}+\1_A \ox S_B,\\
&  \tr_B (X_{AB}+k^{-1}(Y_{AB}+V_{AB}))\le \mu\1_A, \\
&  X_{AB}, Y_{AB}, V_{AB}\ge0.
\end{split}\end{equation}
To remove the \textbf{PPTp} constraint, set $Y_{AB}=V_{AB}=0$. To remove the \textbf{NS} constraint, set $S_B=0$. The strong duality holds for $F^{PPTp}(\cN,k)$, then $F^{PPTp}(\cN,k)=F_d^{PPTp}(\cN,k)$. 
Leung and Matthews \cite{Leung2015c} further introduced the quantum data transmission via quantum channels assisted with $\Omega$ codes.
The $\Omega$ codes assisted zero-error quantum  capacity is given by
   $$Q_0^{\O}(\cN)
        = \sup_n \max \left\{ \frac{1}{n} \log k_n
        : F^{\O}(\cN^{\ox n}, k_n) = 1 , k_n\ge0\right\}.$$
When $n=1$, $Q_0^{\O,(1)}(\cN)=\left\lfloor {\kappa^\O(\cN)} \right\rfloor $ is the one-shot $\O$ codes assisted zero-error quantum capacity, where
\begin{equation}\label{PPT K}
        \kappa^\O(\cN)
        := \max \left\{ k
        : F^{\O}(\cN, k) = 1, k\ge 0 \right\},
\end{equation}
and $\left\lfloor {\kappa^\O(\cN)} \right\rfloor$ means the integer part of $\kappa^\O(\cN)$.
The corresponding quantum capacity is given by
\begin{equation}
        Q^{\O}(\cN)
        := \sup \{r: \mathop {\lim }\limits_{n \to \infty }  
        : F^{\O}(\cN^{\ox n}, \left\lfloor2^{rn}\right\rfloor) = 1\}.
\end{equation}
    
The so-called ``non-commutative graph theory'' was first suggested in \cite{Duan2013a}. The non-commutative graph associated with the channel captures the zero-error communication properties, thus playing a similar role to confusability graph of a classical channel. The zero-error classical capacity of a quantum channel in the presence of quantum feedback only depends on the Choi-Kraus operator space of the channel \cite{Duan2015}.  That is to say, the Choi-Kraus operator space $K$ plays a role that is quite similar to the bipartite graph and  $K$ is alternatively called ``non-commutative bipartite graph'' \cite{Duan2016}.
Based on the idea in \cite{Yang2015}, we also define the potential $\O$ codes assisted quantum capacity
 \begin{equation}
 Q_p^{\O}(\cN) := \sup_{\cM} \left[Q_p^{\O}(\cN\ox\cM)-Q_p^{\O}(\cM) \right].
\end{equation}

In this paper,  we first connect $\O$ codes assisted zero-error quantum  capacity to the non-commutative bipartite graph. We then introduce the activated PPTp codes assisted zero-error quantum capacity. Furthermore, we present a general SDP upper bound $Q_\Gamma$  of quantum capacity.  A general upper bound is usually difficult to find, however, our upper bound $Q_\G$ can be applied to evaluate the quantum capacity of an arbitrary channel efficiently, whereas most previous upper bounds rely on specific conditions which can be different for each channel.  We show that $Q_\Gamma$ is always smaller than or equal to the  ``Partial transposition bound'' and the inequality can be strict. $Q_\G$ is additive under tensor product, and thus is an upper bound of the potential PPTp assisted quantum capacity. We also demonstrate that this SDP upper bound is strictly better than several known upper bounds by explicit examples. For the super-activation of quantum capacity \cite{Smith2008}, $Q_\Gamma$ can also be applied to evaluate the super-activation.
 

\section{Assisted zero-error quantum  capacity and non-commutative bipartite graph}
As non-commutative bipartite graphs play an important role in zero-error classical communication, we will investigate the relationship between  zero-error quantum capacity and non-commutative bipartite graph in this section. To be specific, we will prove that zero-error quantum capacities assisted with NS, PPTp or NS$\cap$PPTp codes also depend only on the non-commutative bipartite graph of a quantum channel.

Let $P_{AB}$ denote the projection onto the support of the Choi-Jamio\l{}kowski matrix of $\cN$, which means that $P_{AB}$ is completely determined by $K(\cN)$. We also define the following SDP which only depends on $K$,
\begin{equation}\label{O F prime}
\begin{split}
D^{\O}(K, k)&= \max \tr  P_{AB}(W_{AB}-\rho_A \ox \1_B)\\
&\text{ s.t. }\  0 \leq W_{AB} \leq \rho_A \ox \1_B, \tr \rho_A=1,\\
&\text{PPTp: } -\frac{\rho_A \ox \1_B}{k} \le W_{AB}^{T_{B}} \le  \frac{\rho_A \ox \1_B}{k},\\
& \text{NS: }
\tr_A W_{AB} = \frac{1}{k^2}\1_B.
\end{split}\end{equation}

\begin{theorem}\label{capacity by graph}
For a quantum channel $\cN$ with non-commutative bipartite graph $K$, 
$F^{\O}(\cN, k)=1$ if and only if
$D^{\O}(K, k) = 0$. 
Furthermore, 
$Q_0^{\O,{(1)}}(\cN)=Q_0^{\O,{(1)}}(K)
        = \left\lfloor {\kappa^{\O}(K)} \right\rfloor$, where $\kappa^{\O}(K)=\max \left\{ k: D^{\O}(K, k) = 0, k\ge 0 \right\}$.
\end{theorem}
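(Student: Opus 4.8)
The plan is to exploit the fact that the SDP defining $F^{\O}(\cN,k)$ in \eqref{PPT prime} and the SDP defining $D^{\O}(K,k)$ in \eqref{O F prime} have \emph{exactly the same feasible region} (the constraints on $W_{AB}$ and $\rho_A$, including the PPTp/NS additions, are literally identical and involve no data of $\cN$), so the whole statement reduces to comparing the two objective functions at feasible points. First I would record the trace-preservation identity $\tr_B J_{AB}=\1_A$, which gives $\tr J_{AB}(\rho_A\ox\1_B)=\tr\rho_A=1$ for every density operator $\rho_A$. Combined with $J_{AB}\ge 0$ and $0\le W_{AB}\le\rho_A\ox\1_B$, this yields $\tr J_{AB}W_{AB}\le 1$ on the entire feasible set, hence $F^{\O}(\cN,k)\le 1$. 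Symmetrically, since $W_{AB}-\rho_A\ox\1_B\le 0$ and $P_{AB}\ge 0$, we get $\tr P_{AB}(W_{AB}-\rho_A\ox\1_B)\le 0$ on the same feasible set, hence $D^{\O}(K,k)\le 0$.

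Next I would show that the two objectives saturate simultaneously on each feasible pair. Write $Q_{AB}:=\rho_A\ox\1_B-W_{AB}\ge 0$. Then $\tr J_{AB}W_{AB}=1$ is equivalent to $\tr J_{AB}Q_{AB}=0$; since $J_{AB}$ and $Q_{AB}$ are positive semidefinite, $\tr J_{AB}Q_{AB}=\|\sqrt{J_{AB}}\sqrt{Q_{AB}}\|^2=0$ forces $\sqrt{J_{AB}}\sqrt{Q_{AB}}=0$, hence $J_{AB}Q_{AB}=0$, i.e.\ the range of $Q_{AB}$ lies in the kernel of $J_{AB}$, i.e.\ $P_{AB}Q_{AB}=0$, which gives $\tr P_{AB}Q_{AB}=0$. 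The converse is immediate: because $P_{AB}$ is the support projection of $J_{AB}$, the relation $\tr P_{AB}Q_{AB}=0$ with $P_{AB},Q_{AB}\ge 0$ gives $P_{AB}Q_{AB}=0$, hence $J_{AB}Q_{AB}=J_{AB}P_{AB}Q_{AB}=0$, hence $\tr J_{AB}W_{AB}=1$. Thus a feasible pair $(W_{AB},\rho_A)$ achieves the value $1$ in \eqref{PPT prime} if and only if it achieves the value $0$ in \eqref{O F prime}. As both feasible sets are compact (and nonempty in the range of $k$ of interest), both optima are attained, so $F^{\O}(\cN,k)=1$ iff $D^{\O}(K,k)=0$.

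The ``furthermore'' part is then bookkeeping. The data of the optimization \eqref{O F prime} are $P_{AB}$ together with channel-independent constraints, and $P_{AB}$ is completely determined by $K(\cN)$; hence $D^{\O}(\cdot,k)$ and therefore $\kappa^{\O}(K)=\max\{k:D^{\O}(K,k)=0,\,k\ge0\}$ are functions of $K$ alone. By the equivalence just proved, $\{k:F^{\O}(\cN,k)=1\}=\{k:D^{\O}(K,k)=0\}$, so $\kappa^{\O}(\cN)=\kappa^{\O}(K)$, and taking integer parts yields $Q_0^{\O,(1)}(\cN)=\lfloor\kappa^{\O}(\cN)\rfloor=\lfloor\kappa^{\O}(K)\rfloor=Q_0^{\O,(1)}(K)$.

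The step I expect to need the most care is the support argument in the middle paragraph: one must verify that $P_{AB}$ really is the orthogonal projection onto $\supp J_{AB}$, that this support is indeed an invariant of $K=\operatorname{span}\{E_k\}$ (it is the image under vectorization of $K$, independent of the choice of Kraus operators), and that ``$\tr(XY)=0$ for $X,Y\ge 0$ iff $\operatorname{ran} X\perp\operatorname{ran} Y$'' is invoked correctly. Everything else — the bound $F^{\O}(\cN,k)\le 1$, the sign of $D^{\O}$, and the final reduction to $K$ — is routine.
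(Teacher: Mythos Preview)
Your proof is correct and follows essentially the same approach as the paper: both use the identity $\tr J_{AB}(\rho_A\ox\1_B)=1$ to recast $F^{\O}(\cN,k)-1$ as $\max\tr J_{AB}(W_{AB}-\rho_A\ox\1_B)$, then invoke the fact that for PSD operators $J_{AB}$ and $Q_{AB}=\rho_A\ox\1_B-W_{AB}$ one has $\tr J_{AB}Q_{AB}=0$ iff $\tr P_{AB}Q_{AB}=0$. Your write-up is simply more explicit about the support argument and about attainment of the optima, which the paper leaves implicit.
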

\begin{proof}
Firstly, noting that
$\tr({\rho_A} \ox {\1_B}){J_{AB}} = \tr_A\tr_B[({\rho_A} \ox {\1_B}){J_{AB}}] =\tr{\rho_A}=1$,
we have that
\begin{align*}
F^{\O}(\cN, k)-1&= \max   \tr  J_{AB}(W_{AB}-\rho_A \ox \1_B)\\
&\text{ s.t. }\  0 \leq W_{AB} \leq \rho_A \ox \1_B, \tr \rho_A=1,\\
&\text{PPTp: } -\frac{\rho_A \ox \1_B}{k} \le W_{AB}^{T_{B}} \le  \frac{\rho_A \ox \1_B}{k},\\
&  \text{NS: }
\tr_A W_{AB} = \frac{1}{k^2}\1_B.
\end{align*}

It is evident that $F^{\O}(\cN, k)-1=0$ if and only if 
$\tr  J_{AB}(W_{AB}-\rho_A \ox \1_B)=0$. Noting that $W_{AB}-\rho_A \ox \1_B\le 0$, then $\tr  J_{AB}(W_{AB}-\rho_A \ox \1_B)=0$ is equivalent to 
$\tr P_{AB}(W_{AB}-\rho_A \ox \1_B)=0$.
Therefore, $F^{\O}(\cN, k)=1$ if and only if $D^{\O}(K, k) = 0$. 
Consequently, zero-error quantum capacity assisted with $\O$ codes also depends only on the non-commutative bipartite graph.
\end{proof}

\begin{theorem}\label{NS capacity}
The one-shot NS codes assisted quantum zero-error capacity of a non-commutative bipartite graph $K$ is given by the interger part of $\kappa^{NS}(K)=\sqrt{\U(K)}$, where $\U(K)$ is the NS assisted zero-error classical  capacity introduced in \cite{Duan2016}.
\end{theorem}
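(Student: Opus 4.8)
The plan is to read off $\kappa^{NS}(K)$ from Theorem~\ref{capacity by graph} and then identify it with the semidefinite quantity $\U(K)$ of \cite{Duan2016}. By Theorem~\ref{capacity by graph} the one-shot NS-assisted zero-error quantum capacity of $K$ is $\lfloor\kappa^{NS}(K)\rfloor$ with $\kappa^{NS}(K)=\max\{k\ge 0:D^{NS}(K,k)=0\}$, so the whole statement reduces to the scalar identity $\kappa^{NS}(K)=\sqrt{\U(K)}$. Concretely I would show that for $k\ge 1$ one has $D^{NS}(K,k)=0$ if and only if $k^{2}\le\U(K)$, and separately note that $D^{NS}(K,k)=0$ forces $k\ge 1$, because the NS constraint $\tr_A W_{AB}=k^{-2}\1_B$ together with $W_{AB}\le\rho_A\ox\1_B$ gives $k^{-2}\1_B=\tr_A W_{AB}\le\1_B$. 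Since $\U(K)\ge 1$ always, these two facts make the admissible range of $k$ exactly $[1,\sqrt{\U(K)}]$, hence $\kappa^{NS}(K)=\sqrt{\U(K)}$ and $Q_0^{NS,(1)}(K)=\lfloor\sqrt{\U(K)}\rfloor$.

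First I would unfold $D^{NS}(K,k)=0$ exactly as in the proof of Theorem~\ref{capacity by graph}: in the SDP~\eqref{O F prime} the objective $\tr P_{AB}(W_{AB}-\rho_A\ox\1_B)$ is always $\le 0$ since $W_{AB}-\rho_A\ox\1_B\le 0$ and $P_{AB}\ge 0$, and it vanishes precisely when $P_{AB}(\rho_A\ox\1_B-W_{AB})=0$, i.e. $P_{AB}W_{AB}=P_{AB}(\rho_A\ox\1_B)$. Thus $D^{NS}(K,k)=0$ is equivalent to the existence of a density operator $\rho_A$ and an operator $W_{AB}$ with $0\le W_{AB}\le\rho_A\ox\1_B$, $\tr_A W_{AB}=k^{-2}\1_B$, and $P_{AB}W_{AB}=P_{AB}(\rho_A\ox\1_B)$. (Downward closure in $k$ over $[1,\infty)$ is then immediate, since adding to $W_{AB}$ a suitable multiple of $\rho_A\ox\1_B-W_{AB}$, which is supported on the complement of $P_{AB}$, only enlarges $\tr_A W_{AB}$ while preserving all constraints.)

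Next I would compare this feasibility system with the SDP that \cite{Duan2016} uses to define $\U(K)$. That SDP describes one-shot NS-assisted zero-error \emph{classical} communication: $\U(K)$ is the supremum of message numbers $m$ for which there are $\rho_A$ and $W_{AB}$ obeying the very same three constraints with $k^{-2}$ replaced by $m^{-1}$. The substitution $m=k^{2}$ is therefore the only change, and it is exactly the superdense-coding normalization: a perfect NS code sending a $k$-dimensional quantum system, combined with a pre-shared maximally entangled state of Schmidt rank $k$ (a non-signalling, hence free, resource inside an NS code), transmits $k^{2}$ classical messages with zero error, while teleportation converts $k^{2}$ perfect classical messages plus that shared entanglement back into a perfect $k$-dimensional quantum channel. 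Matching the two systems under $m=k^{2}$ yields $D^{NS}(K,k)=0\iff k^{2}\le\U(K)$, which together with the reduction above finishes the proof.

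I expect the delicate point to be this last identification. One must check that the graph/support constraint in the $\U$-SDP of \cite{Duan2016} is literally $P_{AB}W_{AB}=P_{AB}(\rho_A\ox\1_B)$ (rather than a condition placing $W_{AB}$ on the range of $P_{AB}$ or on its orthocomplement), that the classical-versus-quantum differences between the two optimal-fidelity SDPs genuinely collapse in the zero-error limit so that no extra constraint survives, and that the optimum $\U(K)$ is attained so that $k^{2}\le\U(K)$ is sharp. The superdense-coding/teleportation picture is a useful consistency check, but the clean way to settle these points is to exhibit an explicit bijection between the feasible points of the two SDPs via $W_{AB}\mapsto W_{AB}$ with $m=k^{2}$.
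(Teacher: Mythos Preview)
Your approach is correct and is essentially the same as the paper's: the identity $\kappa^{NS}(K)=\sqrt{\U(K)}$ comes from the quadratic rescaling $m=k^{2}$ that turns the NS zero-error quantum feasibility system into the NS zero-error classical one. The paper, however, executes this in two lines by the explicit change of variables $U_{AB}=k^{2}W_{AB}$ and $S_{A}=k^{2}\rho_{A}$: the constraint $\tr\rho_A=1$ becomes $\tr S_A=k^{2}$, the NS constraint becomes $\tr_A U_{AB}=\1_B$, and the remaining conditions become $0\le U_{AB}\le S_A\ox\1_B$ with $\tr P_{AB}(S_A\ox\1_B-U_{AB})=0$, so that maximizing $k$ is maximizing $\sqrt{\tr S_A}$, which is literally the SDP for $\U(K)$ from \cite{Duan2016}. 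Once you do this substitution, your downward-closure argument, the $k\ge 1$ check, the superdense-coding/teleportation heuristic, and the ``delicate points'' about matching support constraints and attainment all dissolve: the two feasibility sets are in bijection, not merely related by a physical protocol.
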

\begin{proof}
We can first simplify $\kappa^{NS}(K)$ to
\begin{align*}
\kappa^{NS}(K)&= \max k \ \text{ s.t. }\  0 \leq k^2W_{AB} \leq k^2\rho_A \ox \1_B,\\
&\phantom{=\max k \text{ s.t. }}  
\tr_A k^2W_{AB} = \1_B, \\
&\phantom{=\max k \text{ s.t. }}   \tr P_{AB}(k^2\rho_A \ox \1_B-k^2W_{AB}) = 0.
\end{align*}
Then suppose that $U_{AB}=k^2W_{AB}$ and $k^2\rho_A =S_A$, therefore
\begin{align*}
\kappa^{NS}(K)&= \max \sqrt {\tr S_A} \ \text{ s.t. }\  0 \leq U_{AB} \leq S_A\ox \1_B,\\
&\phantom{=\max k \ \text{ s.t. }}  
\tr_A U_{AB} = \1_B, \\
&\phantom{=\max k \ \text{ s.t. }}  \tr P_{AB}(S_A\ox \1_B-U_{AB}) = 0.
\end{align*}
Hence, $[\kappa^{NS}(K)]^2=\U(K)$.
\end{proof}

For a quantum channel $\cN$ assisted PPTp codes, we can ``borrow'' a noiseless qudit channel $I_d$ whose zero-error quantum capacity is $d$, then we can use $\cN\ox I_d$ to transmit information. After the communication finishes we ``pay back'' the capacity of $I_d$. This kind of communication method was suggested in \cite{Acin2015a,Duan2015a}, and was highly relevant to the notion of \textit{potential capacity} recently studied by Winter and Yang \cite{Yang2015}. Based on this model, we define
the one-shot activated PPTp codes assisted zero-error quantum capacity (message number form) is
\begin{equation}\label{activated K}
      \kappa_a^{PPTp}(\cN)
        := \mathop {\sup }\limits_{d \ge 2} \frac{\left\lfloor {\kappa^{PPTp} (\cN \ox {I_d})}\right\rfloor }{d}.
\end{equation}
where $I_d$ is a noiseless qudit channel.

\begin{proposition}\label{tensor qudit channel}
For a quantum channel $\cN$ and a qudit noiseless channel $I_d$,
$F^{PPTp}(\cN \ox I_d,kd)=F^{PPTp}(\cN,k)$.
Consequently,
$\kappa^{PPTp} (\cN \ox I_d)=d \kappa^{PPTp}(\cN)$.
\end{proposition}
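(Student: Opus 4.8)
The plan is to show the two SDPs defining $F^{PPTp}(\cN\ox I_d,kd)$ and $F^{PPTp}(\cN,k)$ have the same optimal value by exhibiting a value-preserving correspondence between their feasible points. First I would fix the notation: writing the noiseless qudit channel $I_d:\cL(A_2')\to\cL(B_2)$, its Choi–Jamio\l{}kowski matrix is (proportional to) the projector onto a maximally entangled state $\proj{\Phi_d}$ on $A_2B_2$, so the Choi matrix of $\cN\ox I_d$ factorizes as $J_{A_1B_1}\ox\proj{\Phi_d}_{A_2B_2}$ up to the ordering of the tensor factors. The key structural fact I would use is that $\proj{\Phi_d}$ is, up to scaling, a rank-one PPT-normalized object whose partial transpose is $d$ times the swap/flip operator, and more importantly that its support is one-dimensional; this is what forces the optimal $W$ for the enlarged problem to ``live on'' the $\proj{\Phi_d}$ subspace.

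The main steps: (1) Given a feasible $(W_{A_1B_1},\rho_{A_1})$ for $F^{PPTp}(\cN,k)$, set $\tilde W = W_{A_1B_1}\ox\frac1d\proj{\Phi_d}_{A_2B_2}$ and $\tilde\rho = \rho_{A_1}\ox\frac1d\1_{A_2}$; check $0\le\tilde W\le\tilde\rho\ox\1_{B_1B_2}$ (using $0\le\frac1d\proj{\Phi_d}\le\frac1d\1_{A_2}\ox\1_{B_2}$), check $\tr\tilde\rho=1$, and check the PPT constraint with parameter $kd$ — here one uses $(\frac1d\proj{\Phi_d})^{T_{B_2}}=\frac1d F_{A_2B_2}$ with $\|F\|=1$, so $\tilde W^{T_{B_1B_2}}=W^{T_{B_1}}\ox\frac1d F$ is bounded in absolute value by $\frac1{kd}\,\tilde\rho\ox\1$. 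The objective is preserved because $\tr[(J_{A_1B_1}\ox\proj{\Phi_d})(W\ox\frac1d\proj{\Phi_d})]=\tr[J_{A_1B_1}W]\cdot\frac1d\tr\proj{\Phi_d}^2=\tr[J_{A_1B_1}W]$. This gives $F^{PPTp}(\cN\ox I_d,kd)\ge F^{PPTp}(\cN,k)$. (2) For the reverse inequality, take an optimal $\tilde W$ for the enlarged problem and project it onto the support of $J_{\cN\ox I_d}$ — i.e. replace $\tilde W$ by $(\1\ox\proj{\Phi_d})\tilde W(\1\ox\proj{\Phi_d})$, which by Theorem~\ref{capacity by graph}-type reasoning (the objective only sees $P_{AB}$, and $W\le\rho\ox\1$ is preserved by such a projection since $\proj{\Phi_d}$ commutes with $\1_{A_2}\ox\1_{B_2}$) does not decrease the objective and stays feasible; this forces $\tilde W = W_{A_1B_1}'\ox\frac1d\proj{\Phi_d}$ for some $W'$, and one then verifies $W'$ together with the corresponding marginal $\rho'$ is feasible for $F^{PPTp}(\cN,k)$ with the same objective value. (3) The statement $\kappa^{PPTp}(\cN\ox I_d)=d\,\kappa^{PPTp}(\cN)$ then follows by specializing: $F^{PPTp}(\cN\ox I_d, m)=1$ with $m$ the message number iff... — here I would argue that the optimum of $F^{PPTp}(\cN\ox I_d,\cdot)$ as a function of the size parameter is monotone, so $F^{PPTp}(\cN\ox I_d, kd)=1 \iff F^{PPTp}(\cN,k)=1$, and combine with the definition of $\kappa^{PPTp}$ as a supremum over sizes; a small argument is needed to pass from the ``$kd$'' grid to all real sizes, using that $F^{PPTp}(\cN\ox I_d,\cdot)$ is concave/monotone in the size parameter through the PPT constraint $-\frac{\rho\ox\1}{k}\le W^{T_B}\le\frac{\rho\ox\1}{k}$.

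The step I expect to be the main obstacle is step (2): proving that an optimal $\tilde W$ can be taken of the product form $W'\ox\frac1d\proj{\Phi_d}$. The projection argument shows the objective is unchanged, but I must be careful that the PPT constraint with parameter $kd$ survives the projection — partial transpose does not commute with projecting onto $\proj{\Phi_d}$ in a naive way, so I would instead argue via the one-dimensionality of the support: any $\tilde W$ supported on $\mathrm{range}(J_{A_1B_1})\ox\CC\ket{\Phi_d}$ is automatically of the form $W'\ox\proj{\Phi_d}/d$ (since the second factor is a rank-one projector), and then the PPT constraint on $\tilde W$ directly translates, via $\proj{\Phi_d}^{T_{B_2}}=F_{A_2B_2}$ and $\|F\|=1$ together with $F^2=\1$, into the PPT constraint on $W'$ with parameter $k$. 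Making the normalization bookkeeping between ``$d$'' factors and the $1/k$ versus $1/(kd)$ bounds fully rigorous is the fiddly part, but it is routine once the product structure is established.
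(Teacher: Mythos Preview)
Your step (1) is correct and matches the paper's ``easy'' direction (the paper phrases it as the multiplicativity $F^{PPTp}(\cN_1,k_1)F^{PPTp}(\cN_2,k_2)\le F^{PPTp}(\cN_1\ox\cN_2,k_1k_2)$ together with $F^{PPTp}(I_d,d)=1$, but your explicit primal construction is equivalent). Step (3) is also fine, and in fact simpler than you suggest: since the identity $F^{PPTp}(\cN\ox I_d,kd)=F^{PPTp}(\cN,k)$ holds for the continuous parameter $k$, the substitution $m=kd$ immediately gives $\kappa^{PPTp}(\cN\ox I_d)=d\,\kappa^{PPTp}(\cN)$ with no grid issue.

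The genuine gap is step (2), and it is not just bookkeeping. Your plan is to replace an optimal $\tilde W$ by $P\tilde W P$ with $P=\1_{A_1B_1}\ox\proj{\Phi_d}$ and then read off a product form. But this projection need not preserve feasibility. First, $P\tilde W P\le P(\tilde\rho\ox\1)P$ does \emph{not} imply $P\tilde W P\le\tilde\rho\ox\1$: already for $A_1$ trivial and $\tilde\rho=\proj{0}_{A_2}$ one has $P(\tilde\rho\ox\1)P=\proj{\Phi_d}\not\le\proj{0}\ox\1$. Second, and more seriously, the PPT constraint is not stable under conjugation by $P$ because $(P\tilde W P)^{T_{B}}$ has no useful relation to $\tilde W^{T_B}$ (indeed $P^{T_{B_2}}=F$ is not a contraction in the relevant sense, and partial transpose is not an operator-monotone-friendly operation here). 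Your fallback---``once $\tilde W$ is supported on $\CC\ket{\Phi_d}$ it is automatically product''---presupposes exactly what the projection was meant to establish, so the argument is circular. You also implicitly need $\tilde\rho$ to be of the form $\rho'\ox\frac{1}{d}\1_{A_2}$ to translate the PPT bound back, and nothing forces that either.

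The paper sidesteps all of this by proving the reverse inequality on the \emph{dual} side. Starting from an optimal dual solution $\{X_1,Y_1,V_1\}$ of $F^{PPTp}(\cN,k)$, it builds a dual feasible point for $F^{PPTp}(\cN\ox I_d,kd)$ by tensoring with the explicit choice $X_2=0$, $Y_2=(\Phi_d^{T_{B'}})_-$, $V_2=(\Phi_d^{T_{B'}})_+$; the key identities $V_2-Y_2=\Phi_d^{T_{B'}}$ and $Y_2+V_2=\1$ make the verification routine and yield the same dual value $u$. If you want to stay on the primal side, the honest fix is to first symmetrize $\tilde W$ and $\tilde\rho$ by the $U\ox\bar U$ twirl on $A_2B_2$ (which commutes with all constraints and fixes the objective), forcing the desired product structure; but the dual argument is shorter.
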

\begin{proof}
On one hand, it is easy to prove that for two quantum channel $\cN_1$ and $\cN_2$, 
$$F^{PPTp}(\cN_1,k_1)F^{PPTp}(\cN_2,k_2)\le F^{PPTp}(\cN_1 \ox \cN_2 ,k_1k_2).$$
Thus, $F^{PPTp}(\cN,k)\le F^{PPTp}(\cN\ox I_d, kd)$.

On the other hand, suppose that $F^{PPTp}(\cN,k)=u$,
assume that the optimal solution to SDP (\ref{ppt quantum dual}) of $F^{PPTp}(\cN,k)$ is $\{X_1,Y_1,V_1\}$. 
For a Hermitian operator $Z$, we define the positive part $Z_+$ and the negative part $Z_-$ to be the unique positive operators such
that  $Z=Z_+-Z_-$ and $Z_+Z_-=0$.
Let  $X_2=0, Y_2=(\Phi_d^{T_{B'}})_{-}, V_2=(\Phi_d^{T_{B'}})_{+}$,
where $ \Phi_d$ is the unnormalized maximally entanglement $\proj{\Phi_d}$ with $\ket{\Phi_d}=\sum_{i=0}^{d-1} \ket{ii}$.
Then, $\{X_2,Y_2,V_2\}$ is a feasible solution to SDP (\ref{ppt quantum dual}) of $F^{PPTp}(I_d,d)$. 
Furthermore,
noting that $Y_2+V_2=(\Phi_d^{T_B'})_{-}+(\Phi_d^{T_B'})_{+}=\1_{BB'}$,
we can assume that $X=X_1\ox\Phi_d$,
$Y-V=-(Y_1-V_1)\ox(Y_2-V_2)=(Y_1-V_1)\ox\Phi_d^{T_{B'}}$ and
$Y+V=(Y_1+V_1)\ox(Y_2+V_2)=(Y_1+V_1)\ox\1_{BB'}$.
Then it is easy  to show  that $\{u,X,Y,V\}$ is a feasible solution to the dual SDP of $F^{PPTp}(\cN \ox I_d,kd)$.

Hence, $F^{PPTp}(\cN \ox I_d,kd)=u=F^{PPTp}(\cN,k)$.
\end{proof}

\begin{proposition}\label{ac capacity}
For a channel $\cN$, $\kappa_a^{PPTp}(\cN)= \kappa^{PPTp} (\cN)$.
Furthermore,
$Q_{0,a}^{PPTp}(\cN)=Q_{0}^{PPTp}(\cN)$. Then, 
$$Q_0^{PPTp}(\cN \ox I_d)=Q_0^{PPTp}(\cN) +\log d.$$
\end{proposition}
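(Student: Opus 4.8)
The proposition is essentially a corollary of Proposition~\ref{tensor qudit channel}, whose identity $\kappa^{PPTp}(\cN\ox I_d)=d\,\kappa^{PPTp}(\cN)$ does all the work; the plan is to substitute it into the relevant definitions and handle the floor functions.

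For the first identity, substituting into (\ref{activated K}) gives $\kappa_a^{PPTp}(\cN)=\sup_{d\ge2}\lfloor d\,\kappa^{PPTp}(\cN)\rfloor/d$. Since $d\,\kappa^{PPTp}(\cN)-1<\lfloor d\,\kappa^{PPTp}(\cN)\rfloor\le d\,\kappa^{PPTp}(\cN)$, every term lies in $(\kappa^{PPTp}(\cN)-1/d,\ \kappa^{PPTp}(\cN)]$, so the supremum over $d\ge2$ equals $\kappa^{PPTp}(\cN)$, giving $\kappa_a^{PPTp}(\cN)=\kappa^{PPTp}(\cN)$.

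For the regularized statements, I would first record that $Q_0^{PPTp}(\cM)=\sup_n\frac1n\log\kappa^{PPTp}(\cM^{\ox n})$: the map $k\mapsto F^{PPTp}(\cM,k)$ is nonincreasing, and the feasible set of (\ref{PPT prime}) is compact with only the two partial-transpose bounds varying (monotonically) in $k$, so $\{k:F^{PPTp}(\cM,k)=1\}$ is a closed interval $[0,\kappa^{PPTp}(\cM)]$ and $k_n=\kappa^{PPTp}(\cM^{\ox n})$ is an admissible choice in the definition of $Q_0^{PPTp}$. Applying the first identity to $\cN^{\ox n}$ shows $\kappa_a^{PPTp}(\cN^{\ox n})=\kappa^{PPTp}(\cN^{\ox n})$ for all $n$, so the activated and unactivated regularizations agree level by level and $Q_{0,a}^{PPTp}(\cN)=Q_0^{PPTp}(\cN)$. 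For the last equality, $(\cN\ox I_d)^{\ox n}\cong\cN^{\ox n}\ox I_{d^n}$ together with Proposition~\ref{tensor qudit channel} at level $n$ gives $\kappa^{PPTp}((\cN\ox I_d)^{\ox n})=d^n\,\kappa^{PPTp}(\cN^{\ox n})$, hence $Q_0^{PPTp}(\cN\ox I_d)=\sup_n\frac1n\log(d^n\kappa^{PPTp}(\cN^{\ox n}))=\log d+Q_0^{PPTp}(\cN)$.

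The only point beyond bookkeeping is justifying that the supremum defining $\kappa^{PPTp}$ is attained, so that $Q_0^{PPTp}$ is literally $\sup_n\frac1n\log\kappa^{PPTp}(\cN^{\ox n})$; this needs a short compactness/semicontinuity argument for the SDP (\ref{PPT prime}), and if one insists on integer code sizes the resulting floor is harmless since $k\mapsto\kappa^{PPTp}$ is supermultiplicative (from $F^{PPTp}(\cN_1,k_1)F^{PPTp}(\cN_2,k_2)\le F^{PPTp}(\cN_1\ox\cN_2,k_1k_2)$ and $F^{PPTp}\le1$) and a tensor-power limit absorbs it. Everything else follows directly from Proposition~\ref{tensor qudit channel}.
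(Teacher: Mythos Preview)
Your proof is correct and follows the same strategy as the paper: both recognize that Proposition~\ref{tensor qudit channel} does all the work and then unwind the definitions. The only real difference is in how the floor in $\kappa_a^{PPTp}(\cN)=\sup_{d\ge2}\lfloor d\,\kappa^{PPTp}(\cN)\rfloor/d$ is handled. The paper splits into a rational case---choosing $d=m$ when $\kappa^{PPTp}(\cN)=t/m$ so the floor is exact and the supremum is attained---and then appeals to continuity for irrational values; you instead bound every term uniformly in $(\kappa^{PPTp}(\cN)-1/d,\,\kappa^{PPTp}(\cN)]$ and read off the supremum directly, which is cleaner and avoids the case distinction altogether. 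The paper also leaves the passage to $Q_{0,a}^{PPTp}$ and to $Q_0^{PPTp}(\cN\ox I_d)$ implicit, whereas you spell these out; your compactness worry is mild since $\kappa^{PPTp}$ is already written as a $\max$ in (\ref{PPT K}), but your monotonicity observation does justify that $\{k:F^{PPTp}(\cM,k)=1\}$ is an interval.
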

\begin{proof}
Let us first consider the case that $\kappa^{PPTp}(\cN)$ is a rational number. W.l.o.g, we assume that $\kappa^{PPTp}(\cN)=\frac{t}{m}$, where $t$ and $m$ are positive integers.
On one hand,
$$\kappa_a^{PPTp}(\cN)\ge {\left\lfloor\kappa^{PPTp}  (\cN) \kappa^{PPTp} (I_m)\right\rfloor}/{m}=\frac{t}{m}.$$
On the other hand, by Proposition \ref{tensor qudit channel}, we have
$$\kappa_a^{PPTp}(\cN) \le \mathop {\sup }\limits_{d \ge 1} [{{\kappa^{PPTp} (\cN \ox {I_d})}}/{d}]= \kappa^{PPTp} (\cN).$$
Hence, $\kappa_a^{PPTp}(\cN)=\kappa^{PPTp}  (\cN)$ and $Q_{0,a}^{PPTp}(\cN)=Q_{0}^{PPTp}(\cN)$.
Finally, the case of irrational numbers can be solved by taking limit and using continuity arguments.
\end{proof}

\begin{example}\label{WH capacity}
The $d$-dimensional Werner-Holevo channel is defined as
$\cW_{d}(\rho)=\frac{1}{d-1}(\1_B\tr \rho- \rho^T)$.
$\cW_{d}$ is anti-degradable and hence has no  quantum capacity. However, the asymptotic quantum capacity and the zero-error quantum capacity of PPT-preserving codes over $\cW_3$ are both $\log \frac{d+2}{d}$ \cite{Leung2015c}. For this $\cW_d$,
$$Q_{0}^{PPTp}(\cW_d)=\log\kappa_a^{PPTp}(\cW_d)=\log \frac{d+2}{d}.$$
We will first show a feasible solution $\{\rho_A, V_{AB}\}$ of $F^{PPTp}(\cW_d,\frac{d+2}{d})=1$. Let 
$\rho_{A}=\frac{1}{d}\1_{A}$ and  $V_{AB}=(\frac{1}{d+2}\1_{AB}-\frac{2}{d(d+2)} \Phi_d)^{T_B}$,
where $ \Phi_d$ is the unnormalized maximally entanglement $\proj{\Phi_d}$ with $\ket{\Phi_d}=\sum_{i=0}^{d-1} \ket{ii}$.
It is easy to check that $\{\rho_{A}, V_{AB}\}$ is a feasible solution such that $F^{PPTp}(\cW_d,\frac{d+2}{d})=1$, which means that
$\kappa^{PPTp}(\cW_d)\ge\frac{d+2}{d}$. Thus,
$\log\kappa_a^{PPTp}(\cW_d)=\log\frac{d+2}{d}=Q_{0}^{PPTp}(\cW_d)$.
\end{example}

\section{A general upper bound of quantum capacity}
Since computing the quantum capacity of a quantum channel is very difficult, we will introduce an SDP upper bound to evaluate the quantum capacity of any channel. Semidefinite programming (SDP) problems \cite{Vandenberghe1996}  can be solved by polynomial time algorithms \cite{Khachiyan1980}. The CVX software  \cite{Grant2008} allows one to solve SDPs efficiently.

To be specific, we define $Q_\G(\cN)=\log\G(\cN)$ and
\begin{equation}\begin{split}
\G(\cN)&= \max   \tr  J_{AB}R_{AB} \\ 
&\text{ s.t. }\   R_{AB},\rho_A\ge0, \tr{\rho_A}=1, \\
&\phantom{\text{ s.t. }} -\rho_A \ox \1_B \le R_{AB}^{T_{B}} \le \rho_A \ox \1_B.
\end{split}\end{equation}
The dual SDP is given by
\begin{equation}\label{dual WN}
\begin{split}
\G(\cN)&= \min   \mu \\
& \text{ s.t. }   Y_{AB},V_{AB}\ge0, (V_{AB}-Y_{AB})^{T_{B}}  \ge J_{AB},\\
&\phantom{\text{ s.t. }}\tr_B(V_{AB}+Y_{AB})\le \mu \1_A.
\end{split}\end{equation}
By strong duality, the values of both the primal and the dual SDP coincide. This quantity also relates to Rains bound \cite{Rains2001} and the improved SDP bound of distillable entanglement \cite{Wang2016}.
$Q_\G$ has some remarkable properties. For example, it is additive:
$Q_\G(\cN \ox \cM)=Q_\G(\cM)+Q_\G(\cN)$ for different quantum channels $\cN$ and $\cM$. This can be proved by utilizing semi-definite programming duality.

\begin{theorem}\label{ppt upper bound}
For quantum channels $\cM$ and $\cN$, 
$Q^{PPTp}(\cN)+Q^{PPTp}(\cM)\le Q^{PPTp}(\cM \ox \cN)\le Q^{PPTp}(\cM)+ Q_\G(\cN)$.

Consequently,
\begin{align*}
Q(\cN)&\le Q^{FCA}(\cN)\le Q^{FHA}(\cN)\\
&\le Q^{PPTp}(\cN)\le Q_p^{PPTp}(\cN)\le Q_\G(\cN),
\end{align*}
where \textbf{FCA}, \textbf{FHA} represent for forward-classical-assisted codes and forward-Horodecki-assisted codes, respectively.
\end{theorem}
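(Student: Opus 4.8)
The plan is to prove the two inequalities on the first line separately and then read off the chain on the second line as a formal consequence, using in addition the additivity $Q_\G(\cM\ox\cN)=Q_\G(\cM)+Q_\G(\cN)$ noted above and the nesting of the relevant code classes. Write $\cM:\cL(A_1')\to\cL(B_1)$ and $\cN:\cL(A_2')\to\cL(B_2)$; recall $J^{\cM\ox\cN}=J^{\cM}_{A_1B_1}\ox J^{\cN}_{A_2B_2}$, and note that $F^{PPTp}(\cdot,k)$ is non-increasing in $k$, since a larger $k$ tightens the constraint $-\rho_A\ox\1_B/k\le W_{AB}^{T_B}\le\rho_A\ox\1_B/k$ in~(\ref{PPT prime}).

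For the lower bound $Q^{PPTp}(\cM)+Q^{PPTp}(\cN)\le Q^{PPTp}(\cM\ox\cN)$ I would use product codes. By the submultiplicativity $F^{PPTp}(\cN_1,k_1)F^{PPTp}(\cN_2,k_2)\le F^{PPTp}(\cN_1\ox\cN_2,k_1k_2)$ proved inside Proposition~\ref{tensor qudit channel}, if $r<Q^{PPTp}(\cM)$ and $s<Q^{PPTp}(\cN)$ then tensoring the two corresponding asymptotically perfect codes yields, for $(\cM\ox\cN)^{\ox n}=\cM^{\ox n}\ox\cN^{\ox n}$, codes of size $\lfloor2^{rn}\rfloor\lfloor2^{sn}\rfloor$ whose channel fidelities still tend to $1$. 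Since $\lfloor2^{rn}\rfloor\lfloor2^{sn}\rfloor\ge\lfloor2^{(r+s-\epsilon)n}\rfloor$ for $n$ large, this forces $r+s-\epsilon\le Q^{PPTp}(\cM\ox\cN)$; letting $\epsilon\to0$ and taking the supremum over $r,s$ gives the claim. The floor/limit bookkeeping is routine.

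The substance of the theorem is the upper bound, which I would derive from the one-shot amortization inequality
\[
F^{PPTp}(\cM\ox\cN,k)\ \le\ F^{PPTp}\!\bigl(\cM,\ k/\G(\cN)\bigr)\qquad(k>0).
\]
To prove it, take a primal-optimal pair $\{\rho_{A_1A_2},\,W_{A_1A_2B_1B_2}\}$ for $F^{PPTp}(\cM\ox\cN,k)$ and a dual-optimal triple $\{\mu_N=\G(\cN),\,Y_{A_2B_2},\,V_{A_2B_2}\}$ for the dual SDP~(\ref{dual WN}) of $\G(\cN)$, so that $G_{A_2B_2}:=(V_{A_2B_2}-Y_{A_2B_2})^{T_{B_2}}\ge J^{\cN}_{A_2B_2}\ge0$ (in particular $G\ge0$) and $\tr_{B_2}(V_{A_2B_2}+Y_{A_2B_2})\le\G(\cN)\,\1_{A_2}$. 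Since $J^{\cM}\ge0$, $W\ge0$ and $G\ge J^{\cN}$, one gets $\tr J^{\cM\ox\cN}W=\tr\bigl[(J^{\cM}\ox J^{\cN})W\bigr]\le\tr\bigl[(J^{\cM}\ox G)W\bigr]=\tr\bigl(J^{\cM}_{A_1B_1}W'_{A_1B_1}\bigr)$, where $W':=\tr_{A_2B_2}\bigl[(\1_{A_1B_1}\ox G_{A_2B_2})\,W\bigr]\ge0$ (positivity of $W'$ uses $G\ge0$ and $W\ge0$). It then remains to exhibit a density operator $\rho'_{A_1}$ making $\{\rho'_{A_1},W'_{A_1B_1}\}$ feasible for $F^{PPTp}(\cM,k/\G(\cN))$: the bound $0\le W'\le\rho'_{A_1}\ox\1_{B_1}$ should follow from $0\le W\le\rho_{A_1A_2}\ox\1_{B_1B_2}$ together with $\tr_{B_2}(V+Y)\le\G(\cN)\1_{A_2}$ (which also fixes the normalisation of $\rho'_{A_1}$), and the rescaled PPT constraint $-\G(\cN)\,\rho'_{A_1}\ox\1_{B_1}/k\le (W')^{T_{B_1}}\le\G(\cN)\,\rho'_{A_1}\ox\1_{B_1}/k$ should follow by moving the $B_1$-partial transpose through the contraction against $\1\ox G$ and invoking $-\rho_{A_1A_2}\ox\1/k\le W^{T_{B_1B_2}}\le\rho_{A_1A_2}\ox\1/k$. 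Granting the lemma, apply it to $(\cM\ox\cN)^{\ox n}=\cM^{\ox n}\ox\cN^{\ox n}$ and use $\G(\cN^{\ox n})=\G(\cN)^n$: if $F^{PPTp}((\cM\ox\cN)^{\ox n},\lfloor2^{rn}\rfloor)\to1$ then $F^{PPTp}(\cM^{\ox n},\lfloor2^{rn}\rfloor/\G(\cN)^n)\to1$; since $\lfloor2^{sn}\rfloor\le\lfloor2^{rn}\rfloor/\G(\cN)^n$ eventually whenever $s<r-Q_\G(\cN)$, monotonicity of $F^{PPTp}(\cM^{\ox n},\cdot)$ gives $F^{PPTp}(\cM^{\ox n},\lfloor2^{sn}\rfloor)\to1$, hence $s\le Q^{PPTp}(\cM)$; letting $s\uparrow r-Q_\G(\cN)$ and then $r\uparrow Q^{PPTp}(\cM\ox\cN)$ gives $Q^{PPTp}(\cM\ox\cN)\le Q^{PPTp}(\cM)+Q_\G(\cN)$.

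The displayed chain then follows formally. We have $Q(\cN)\le Q^{FCA}(\cN)\le Q^{FHA}(\cN)\le Q^{PPTp}(\cN)$ because unassisted codes $\subseteq$ forward-classical-assisted codes $\subseteq$ forward-Horodecki-assisted codes $\subseteq$ PPT-preserving codes, so maximising the channel fidelity over the larger class can only increase it, and hence the induced capacities are ordered. Next, $Q^{PPTp}(\cN)\le Q_p^{PPTp}(\cN)$ is immediate from the lower bound just proved, since $Q^{PPTp}(\cN\ox\cM)-Q^{PPTp}(\cM)\ge Q^{PPTp}(\cN)$ for every $\cM$; and $Q_p^{PPTp}(\cN)\le Q_\G(\cN)$ is immediate from the upper bound, since $Q^{PPTp}(\cN\ox\cM)-Q^{PPTp}(\cM)\le Q_\G(\cN)$ for every $\cM$. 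I expect the one-shot amortization inequality to be the main obstacle: the delicate point is the partial-transpose bookkeeping — checking that after contracting $W$ against $\1\ox G$ the $B_1$-PPT constraint survives with exactly the rescaled size $k/\G(\cN)$, and that the reduced normalisation $\rho'_{A_1}$ is a genuine density operator — while everything else is SDP duality plus routine limiting arguments.
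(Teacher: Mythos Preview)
Your overall strategy matches the paper's: the lower bound via product codes, the upper bound via a one-shot amortization lemma equivalent to $F^{PPTp}(\cM\ox\cN,\,k\G(\cN))\le F^{PPTp}(\cM,k)$, and then the chain as a formal consequence. The difference is in how you attack the lemma. You propose a \emph{primal} construction: take a primal optimal $(\rho,W)$ for the product and contract against the dual witness $G=(V-Y)^{T_{B_2}}$ of $\G(\cN)$ to produce a primal feasible $(\rho',W')$ for $\cM$. The paper instead uses a \emph{dual} construction: tensor a dual optimal $\{u_1,X_1,Y_1,V_1\}$ of $F^{PPTp}(\cM,k)$ with a dual optimal $\{u_2,Y_2,V_2\}$ of $\G(\cN)$, setting $X=X_1\ox J_{\cN}$, $V-Y=(V_1-Y_1)\ox(V_2-Y_2)$ and $V+Y=(V_1+Y_1)\ox(V_2+Y_2)$, and checks that this is dual feasible for $F^{PPTp}(\cM\ox\cN,\,ku_2)$ with value $u_1$.

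Your primal route has a genuine obstacle precisely at the step you flag as delicate, but not in the PPT bookkeeping --- rather in the constraint $W'\le\rho'_{A_1}\ox\1_{B_1}$ with $\tr\rho'_{A_1}=1$. The PPT part actually goes through cleanly: writing $G^{T_{B_2}}=V-Y$ with $V,Y\ge0$ and using $\pm W^{T_{B_1B_2}}\le\rho_{A_1A_2}\ox\1/k$ gives $\pm(W')^{T_{B_1}}\le\G(\cN)\,(\tr_{A_2}\rho_{A_1A_2})\ox\1_{B_1}/k$, exactly the rescaled bound with $\rho'_{A_1}=\tr_{A_2}\rho_{A_1A_2}$. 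The trouble is the other constraint. From $W\le\rho_{A_1A_2}\ox\1$ and $G\ge0$ you only obtain
\[
W'\ \le\ \tr_{A_2}\bigl[(\1\ox\tr_{B_2}G)\,\rho_{A_1A_2}\bigr]\ox\1_{B_1},
\]
and since $G\ge J_{\cN}$ forces $\tr_{B_2}G\ge\tr_{B_2}J_{\cN}=\1_{A_2}$ (the inequality points the wrong way), the right-hand side dominates $\rho'_{A_1}\ox\1_{B_1}$ rather than being dominated by it. Absorbing the excess by rescaling $(W',\rho')\mapsto(W'/c,\rho'/c)$ divides the objective by the same $c\ge1$ and yields only $F^{PPTp}(\cM,k/\G(\cN))\ge F^{PPTp}(\cM\ox\cN,k)/c$, too weak for the capacity argument since $c$ can be as large as $\G(\cN)^n$ after $n$ uses. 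The paper's dual construction bypasses this: the trace constraint in~(\ref{dual WN}) reads $\tr_{B_2}(V_2+Y_2)\le\G(\cN)\1_{A_2}$, and the tensor form of $Y+V$ makes the factor $(k\G(\cN))^{-1}$ appear naturally in the second dual constraint of~(\ref{ppt quantum dual}) without any renormalization of the objective.
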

\begin{proof}
Firstly, from SDP (\ref{PPT prime}), it is easy to see that $Q^{PPTp}(\cN)+Q^{PPTp}(\cM)\le Q^{PPTp}(\cM \ox \cN)$.

Secondly, assume that $Q^{PPTp}(\cM \ox \cN)=q$, then
$$ \mathop {\lim }\limits_{n \to \infty }  
        F^{PPTp}((\cN\ox\cM)^{\ox n}, \left\lfloor2^{qn}\right\rfloor) = 1.$$
Let $\G(\cN)=t$, from Lemma \ref{upper bound tensor F} below, we have that
\begin{align*}
    1\ge &\mathop {\lim }\limits_{n \to \infty }  
        F^{PPTp}(\cM^{\ox n}, \frac{\left\lfloor2^{qn}\right\rfloor}{t^n})\\
        \ge &  \mathop {\lim }\limits_{n \to \infty }  
        F^{PPTp}((\cN\ox\cM)^{\ox n}, \left\lfloor2^{qn}\right\rfloor)= 1. 
\end{align*}
Let $Q^{PPTp}(\cM)=r$, then from the definition,
\begin{equation}
\left\lfloor2^{rn}\right\rfloor\ge \frac{\left\lfloor2^{qn}\right\rfloor}{t^n}, n \to \infty.
\end{equation}
Then, it is easy to see that
$t2^{r}\ge (2^{qn}-1)^{1/n}$ ($n \to \infty$),
which means that  $\log t+r\ge q$.
Hence, $Q^{PPTp}(\cM \ox \cN) 
\le Q^{PPTp}(\cM)+ Q_\G(\cN)$.
Then we immediately have that
$Q^{PPTp}(\cN)\le  Q_p^{PPTp}(\cN)
=\sup_{\cM}[Q^{PPTp}(\cM \ox \cN)-Q^{PPTp}(\cM)] 
\le Q_\G(\cN)$.
\end{proof}

\begin{lemma}\label{upper bound tensor F}
For quantum channels $\cN_1$ and $\cN_2$, we have that
\begin{align*}
&F^{PPTp}(\cN_1,k)F^{PPTp}(\cN_2,\G(\cN_2))\\
\le  &F^{PPTp}(\cN_1\ox \cN_2, k\G(\cN_2)) 
\le  F^{PPTp}(\cN_1,k).
\end{align*}
\end{lemma}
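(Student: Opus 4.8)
The plan is to prove the two inequalities separately. The left one is the routine ``tensor the feasible solutions'' bound; the right one, which carries the content, rests on semidefinite programming duality and mirrors the construction already carried out for the noiseless qudit in the proof of Proposition~\ref{tensor qudit channel}. Throughout let $J_1,J_2$ denote the Choi-Jamio\l{}kowski matrices of $\cN_1,\cN_2$, so that the Choi matrix of $\cN_1\ox\cN_2$ is $J_1\ox J_2$ up to the obvious reordering of tensor factors.

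First I would prove the left inequality. Take optimal primal solutions $\{W_1,\rho_1\}$ and $\{W_2,\rho_2\}$ of SDP~(\ref{PPT prime}) for $F^{PPTp}(\cN_1,k)$ and for $F^{PPTp}(\cN_2,\G(\cN_2))$, and check that $\{W_1\ox W_2,\ \rho_1\ox\rho_2\}$ is feasible for $F^{PPTp}(\cN_1\ox\cN_2,k\G(\cN_2))$: the sandwich $0\le W_1\ox W_2\le(\rho_1\ox\rho_2)\ox\1$ is immediate, $\tr\rho_1\ox\rho_2=1$, and the partial-transpose bound follows from $(W_1\ox W_2)^{T_B}=W_1^{T_B}\ox W_2^{T_B}$ together with $|A\ox B|=|A|\ox|B|$ for Hermitian $A,B$. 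The objective factorizes, $\tr(J_1\ox J_2)(W_1\ox W_2)=\tr J_1W_1\cdot\tr J_2W_2$, which gives exactly $F^{PPTp}(\cN_1,k)F^{PPTp}(\cN_2,\G(\cN_2))$. This is the sub-multiplicativity of $F^{PPTp}$ recalled inside the proof of Proposition~\ref{tensor qudit channel}, specialised to the second code size $\G(\cN_2)$; combined with the right inequality it also re-derives $F^{PPTp}(\cN_2,\G(\cN_2))\le1$, which is anyway clear from $F^{PPTp}\le1$.

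For the right inequality I would use weak duality for $F^{PPTp}(\cN_1\ox\cN_2,k\G(\cN_2))$: it is enough to exhibit a feasible point of the dual SDP~(\ref{ppt quantum dual}) (with $S=0$) whose objective equals $F^{PPTp}(\cN_1,k)$. Invoking strong duality for $F^{PPTp}(\cN_1,k)$, fix an optimal dual quadruple $\{\mu_1,X_1,Y_1,V_1\}$ with $\mu_1=F^{PPTp}(\cN_1,k)$; and take an optimal solution $\{Y_2,V_2\}$ of the $\G(\cN_2)$ dual SDP~(\ref{dual WN}), so that $(V_2-Y_2)^{T_B}\ge J_2$ and $\tr_B(V_2+Y_2)\le\G(\cN_2)\1$. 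From these data I would build operators $X,Y,V\ge0$ on the composite systems as tensor combinations of $X_1,Y_1,V_1$ with $J_2$ and with the positive and negative parts of the partially transposed operators that occur, just as the noiseless qudit is treated in Proposition~\ref{tensor qudit channel} (where $\G(I_d)=d$ and $Y_2=(\Phi_d^{T_{B'}})_-$, $V_2=(\Phi_d^{T_{B'}})_+$ play exactly these roles), and set $\mu=\mu_1$. The first dual constraint $J_1\ox J_2+(Y-V)^{T_B}\le X$ should then collapse, after factoring out a positive operator on the $\cN_2$ systems, to the corresponding constraint of~(\ref{ppt quantum dual}) for $\cN_1$, using $J_1\ge0$ and $(V_2-Y_2)^{T_B}\ge J_2\ge0$.

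The delicate step, and the one I expect to be the main obstacle, is the second dual constraint $\tr_B\big(X+(k\G(\cN_2))^{-1}(Y+V)\big)\le\mu\1$. The identities one leans on are $\tr_B J_{2}=\1$ (trace preservation of $\cN_2$) and $\tr_B(M^{T_B})=\tr_B M$, which let every partial trace over a $\cN_2$-output system be collapsed; the factor $\G(\cN_2)$ produced by $\tr_B(V_2+Y_2)\le\G(\cN_2)\1$ then has to cancel precisely against the $(k\G(\cN_2))^{-1}$ coming from the enlarged code size, leaving exactly the inequality $\tr_B(X_1+k^{-1}(Y_1+V_1))\le\mu_1\1$ that feasibility for $\cN_1$ already provides. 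Choosing the positive/negative-part decompositions of the transposed operators so that each of these partial traces either returns $\1$ or is bounded by $\G(\cN_2)\1$, with the $\G(\cN_2)$ factors lined up correctly against $(k\G(\cN_2))^{-1}$, is where all the care goes; the $I_d$ case of Proposition~\ref{tensor qudit channel} is both the template and the consistency check, since there $\tr_B\Phi_d=\1$ while $\tr_B(Y_2+V_2)=d\,\1$, so the $d$'s cancel as needed. Once dual feasibility is checked with objective $\mu=\mu_1$, weak duality yields $F^{PPTp}(\cN_1\ox\cN_2,k\G(\cN_2))\le\mu_1=F^{PPTp}(\cN_1,k)$.
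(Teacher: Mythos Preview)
Your approach is essentially the paper's: the left inequality by tensoring primal solutions, the right one by tensoring an optimal dual solution $\{\mu_1,X_1,Y_1,V_1\}$ of $F^{PPTp}(\cN_1,k)$ with an optimal dual solution $\{Y_2,V_2\}$ of $\G(\cN_2)$, following the template of Proposition~\ref{tensor qudit channel}. The paper is slightly more explicit than you are, specifying $X=X_1\ox J_2$, $V-Y=(V_1-Y_1)\ox(V_2-Y_2)$, $Y+V=(Y_1+V_1)\ox(Y_2+V_2)$ (and then, like you, leaving the actual feasibility check to the reader); your reference to ``positive and negative parts of the partially transposed operators'' is unnecessary in the general case, since $Y_2,V_2$ come directly from the $\G$ dual rather than from any such decomposition.
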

\begin{proof}
It is easy to prove the first inequality. For the latter inequality, assume that the optimal solutions to dual SDPs of $F^{PPTp}(\cN_1,k)$  and $\G(\cN_2)$ are $\{u_1,X_1,Y_1,V_1 \}$ and $\{u_2,Y_2,V_2\}$, respectively. Let
$X=X_1 \ox J_2, V-Y=(V_1-Y_1)\ox (V_2-Y_2), Y+V=(Y_1+V_1)\ox (Y_2+V_2)$,
then the idea is to prove that $\{u_1, X, Y, V\}$ is a feasible solution to dual SDP of $F^{PPTp}(\cN_1\ox \cN_2, k\G(\cN))$, which means that $F^{PPTp}(\cN_1\ox \cN_2, k\G(\cN))\le F^{PPTp}(\cN_1,k)$.
\end{proof}

\begin{corollary}
For any two quantum channels $\cN$ and $\cM$, we have that
$Q^{PPTp}(\cN \ox \cM)\le Q_\G(\cN)+Q_\G(\cM)$. 
\end{corollary}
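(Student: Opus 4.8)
The plan is to simply chain together the two bounds already established in Theorem~\ref{ppt upper bound}, so no new machinery is needed. Applying that theorem to the pair $(\cM,\cN)$ (and using that $\cM\ox\cN$ and $\cN\ox\cM$ are isomorphic) gives the one-sided bound $Q^{PPTp}(\cN\ox\cM)\le Q^{PPTp}(\cM)+Q_\G(\cN)$. Separately, the displayed chain of inequalities in the same theorem contains $Q^{PPTp}(\cM)\le Q_\G(\cM)$. Substituting the latter into the former immediately yields $Q^{PPTp}(\cN\ox\cM)\le Q_\G(\cM)+Q_\G(\cN)$, which is exactly the assertion.

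An equivalent, slightly more conceptual route I could take is to invoke the additivity of $Q_\G$ stated above, $Q_\G(\cN)+Q_\G(\cM)=Q_\G(\cN\ox\cM)$, and then apply the single-channel bound $Q^{PPTp}(\cL)\le Q_\G(\cL)$ from Theorem~\ref{ppt upper bound} to the channel $\cL=\cN\ox\cM$; this recovers the same inequality. Either way, the main point worth emphasizing is that the argument deliberately does not presuppose additivity of $Q^{PPTp}$ itself — which is unknown — but only uses the weaker superadditivity/subadditivity bounds of Theorem~\ref{ppt upper bound} together with the (provable, via SDP duality) additivity of $Q_\G$.

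There is essentially no obstacle in this proof: it is an immediate corollary of results proved earlier in the paper, and the only care required is to keep straight which of the two inequalities in Theorem~\ref{ppt upper bound} is being used and to note that the desired statement is the natural "halfway" consequence — it combines the tensor-product bound $Q^{PPTp}(\cN\ox\cM)\le Q^{PPTp}(\cM)+Q_\G(\cN)$ with the fact that $Q_\G$ dominates $Q^{PPTp}$ on each factor.
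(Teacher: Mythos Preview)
Your proposal is correct and is exactly the intended argument: the paper states this as a corollary without proof because it follows immediately from Theorem~\ref{ppt upper bound}, either by chaining $Q^{PPTp}(\cN\ox\cM)\le Q^{PPTp}(\cM)+Q_\G(\cN)$ with $Q^{PPTp}(\cM)\le Q_\G(\cM)$, or equivalently via the additivity of $Q_\G$.
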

\begin{remark}
In \cite{Smith2008}, the super-activation of quantum capacity says that two zero-capacity channels (50\% erasure channel $\cN_{e}^{0.5}$ and a Horodecki channel $\cN_H$) can have a nonzero capacity when used together, i.e. $Q(\cN_{e}^{0.5} \ox \cN_H) >0.01$. Here, applying this corollary, we can evaluate the super-activation: $Q(\cN_{e}^{0.5} \ox \cN_H) \le Q_\G(\cN_{e}) +Q_\G(\cN_H)= Q_\G(\cN_{e}^{0.5})\approx 1.123$.
\end{remark}

\section{Comparison with other bounds}
In \cite{Holevo2001}, Holevo and Werner gave a general upper bound of quantum capacity for channel $\cN$ with Choi-Jamio\l{}kowski matrix $J_{\cN}$:
\begin{equation}
Q(\cN) \le Q_{\Theta}(\cN)=\log \| J_{\cN}^{T_B}\|_{cb}.
\end{equation}
Here $\|\cdot\|_{cb}$ is the completely bounded trace norm, which is known
to be efficiently computable by semidefinite programming \cite{Watrous2012}.

\begin{theorem}
For a quantum channel $\cN$, 
$$Q(\cN)\le Q_{\G}(\cN) \le Q_{\Theta}(\cN),$$
and both inequalities can be strict.
\end{theorem}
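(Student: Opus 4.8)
The first inequality requires nothing new: Theorem~\ref{ppt upper bound} already supplies $Q(\cN)\le Q^{PPTp}(\cN)\le Q_p^{PPTp}(\cN)\le Q_\Gamma(\cN)$, and $\log$ is monotone. So the real content is the second inequality $Q_\Gamma(\cN)\le Q_\Theta(\cN)$, equivalently $\Gamma(\cN)\le\|J_\cN^{T_B}\|_{cb}$, together with the two strictness claims.

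My plan for $\Gamma(\cN)\le\|J_\cN^{T_B}\|_{cb}$ is to exhibit both sides as the \emph{same} semidefinite maximisation, the one for $\Gamma(\cN)$ carrying just one extra constraint. First I recall (cf.~\cite{Watrous2012}) the SDP for the completely bounded trace norm of the Hermiticity-preserving map with Choi operator $J_\cN^{T_B}$, in symmetrised form, $\|J_\cN^{T_B}\|_{cb}=\min\{\mu:\ W_{AB}\ge0,\ -W_{AB}\le J_\cN^{T_B}\le W_{AB},\ \tr_B W_{AB}\le\mu\1_A\}$, where the partial trace is over the output system $B$. Dualising this SDP is a routine Lagrangian computation: attach $\rho_A$ to $\tr_B W\le\mu\1_A$ and $X_1,X_2\ge0$ to $J_\cN^{T_B}\le W$ and $-J_\cN^{T_B}\le W$; minimising over $\mu$ forces $\tr\rho_A=1$, minimising over Hermitian $W$ forces $X_1+X_2\le\rho_A\ox\1_B$, and setting $\tilde R:=X_1-X_2$ (so that, up to a support/continuity argument for rank-deficient $\rho_A$, the condition $X_1+X_2\le\rho_A\ox\1_B$ is the two-sided bound $-\rho_A\ox\1_B\le\tilde R\le\rho_A\ox\1_B$) gives $\|J_\cN^{T_B}\|_{cb}=\max\{\tr J_\cN^{T_B}\tilde R_{AB}:\ \rho_A\ge0,\ \tr\rho_A=1,\ -\rho_A\ox\1_B\le\tilde R_{AB}\le\rho_A\ox\1_B\}$. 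Next I rewrite the primal SDP for $\Gamma(\cN)$ in the variable $\tilde R_{AB}:=R_{AB}^{T_B}$: since partial transposition is an involution and $\tr J_{AB}R_{AB}=\tr J_\cN^{T_B}\tilde R_{AB}$, the constraint $R_{AB}\ge0$ becomes $\tilde R_{AB}^{T_B}\ge0$ while the two-sided bound is unchanged, so $\Gamma(\cN)=\max\{\tr J_\cN^{T_B}\tilde R_{AB}:\ \tilde R_{AB}^{T_B}\ge0,\ \rho_A\ge0,\ \tr\rho_A=1,\ -\rho_A\ox\1_B\le\tilde R_{AB}\le\rho_A\ox\1_B\}$. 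This is exactly the preceding maximisation with the single extra constraint $\tilde R_{AB}^{T_B}\ge0$, so $\Gamma(\cN)$ optimises the same objective over a subset of the same feasible region; hence $\Gamma(\cN)\le\|J_\cN^{T_B}\|_{cb}$. (One could instead try to convert a cb-norm-optimal $W$ directly into a feasible $\{Y,V\}$ for the dual SDP~(\ref{dual WN}), but since partial transposition does not preserve positivity this is much less direct than the comparison above.)

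For strictness, the first inequality is already strict for the qutrit Werner--Holevo channel $\cW_3$: it is anti-degradable, so $Q(\cW_3)=0$, whereas Theorem~\ref{ppt upper bound} and Example~\ref{WH capacity} give $Q_\Gamma(\cW_3)\ge Q^{PPTp}(\cW_3)=\log\frac{5}{3}>0$. For the second inequality I would exhibit an explicit channel for which every optimiser of the cb-norm maximisation above violates $\tilde R^{T_B}\ge0$, so that adjoining this constraint strictly lowers the optimal value; such a channel is presented below.

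The main obstacle is the second inequality, and within it the correct derivation of the max-form of $\|J_\cN^{T_B}\|_{cb}$ --- in particular keeping the normalised density operator $\rho_A$ on the reference system $A$ rather than on the output $B$, so that it lines up with the $\rho_A\ox\1_B$ occurring in the SDP for $\Gamma(\cN)$. Once both programs are written in the same variables, the inclusion of feasible regions (and hence the inequality) is immediate; the remaining genuine work is to construct the explicit channel witnessing that the second inequality can be strict.
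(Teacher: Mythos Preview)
Your argument for $Q_\Gamma(\cN)\le Q_\Theta(\cN)$ is correct, but the route differs from the paper's. The paper works directly with Watrous's block-matrix maximisation
\[
\|J_\cN^{T_B}\|_{cb}=\max\ \tfrac12\tr J_\cN^{T_B}(X+X^\dagger)\quad\text{s.t.}\quad
\begin{pmatrix}\rho_0\ox\1 & X\\ X^\dagger & \rho_1\ox\1\end{pmatrix}\ge0,
\]
restricts to $\rho_0=\rho_1=\rho_A$ and $X=X^\dagger$ (which only gives the inequality $\ge$, all that is needed), and then plugs in $X=R_{AB}^{T_B}$ from the $\Gamma$-optimal pair $(R_{AB},\rho_A)$, checking the block constraint by the explicit decomposition into two PSD blocks. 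You instead start from a symmetrised \emph{minimisation} for $\|J_\cN^{T_B}\|_{cb}$, dualise it, and then compare feasible regions after the change of variable $\tilde R=R^{T_B}$. Both routes land on the same inclusion: the $\Gamma$-primal is the symmetric cb-norm maximisation with the single extra constraint $\tilde R^{T_B}\ge0$. Your version makes that ``extra constraint'' structure more transparent; the paper's version is shorter because it skips the dualisation and uses the Watrous max form directly.

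Two small points. First, the ``symmetrised'' min form you invoke is not literally Theorem~6 of \cite{Watrous2012}; it is the reduction of Watrous's two-block dual to $Z_0=Z_1=W$, which is valid precisely because $J_\cN^{T_B}$ is Hermitian (so the dual has a $(Z_0,Z_1)\leftrightarrow(Z_1,Z_0)$ symmetry and one may average). You should state this reduction rather than cite it as given. Second, your parenthetical about a ``support/continuity argument for rank-deficient $\rho_A$'' is unnecessary: the equivalence between $\{X_1,X_2\ge0,\ X_1+X_2\le\rho_A\ox\1_B\}$ and $\{-\rho_A\ox\1_B\le\tilde R\le\rho_A\ox\1_B\}$ is exact in both directions via $X_{1,2}=\tfrac12(\rho_A\ox\1_B\pm\tilde R)$, with no rank hypothesis.

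For strictness, your use of $\cW_3$ for the first inequality matches the paper's earlier Example~\ref{WH capacity}. For the second inequality the paper does not argue abstractly that some optimiser violates $\tilde R^{T_B}\ge0$; it simply exhibits the family $\cN_r$ (Kraus operators $E_0=\proj0+\sqrt r\proj1$, $E_1=\sqrt{1-r}\ketbra01+\ketbra12$) and computes both SDPs numerically, showing a visible gap. Your proposal promises ``such a channel is presented below'' but does not supply one; you will need a concrete example (the paper's $\cN_r$ works) to complete the strictness claim.
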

\begin{proof}
Assume that the optimal solution of $\G(\cN)$ is $\{R_{AB},\rho_A\}$, then 
$\G(\cN)=\tr J_{\cN} R_{AB}=\tr J_{\cN}^{T_B} R_{AB}^{T_B}$.

From Theorem 6 in \cite{Watrous2012}, 
\begin{equation}\begin{split}\label{cb norm}
 \| J_{\cN}^{T_B}\|_{cb}= \max \frac{1}{2}\tr (J_{\cN}^{T_B} X)+ \frac{1}{2}\tr (J_{\cN}^{T_B} X^\dagger)   \\
 \text{  s.t. } \left( {\begin{array}{*{20}{c}}
{\rho_0 \otimes \1 }&X\\
X^\dagger&{\rho_1 \otimes \1  }
\end{array}} \right)\ge 0.
\end{split}\end{equation}

Let us add two constraints $\rho_0=\rho_1=\rho_A$ and $X=X^\dagger$, then
$$
 \| J_{\cN}^{T_B}\|_{cb}\ge \max \tr (J_{\cN}^{T_B} X) \text{  s.t. } \left( {\begin{array}{*{20}{c}}
{\rho_A \otimes \1 }&X\\
X&{\rho_A \otimes \1  }
\end{array}} \right)\ge 0.$$

Noting that $-\rho_A \ox \1\le R_{AB}^{T_B}\le \rho_A \ox \1$, then
\begin{align*}
&\left( {\begin{array}{*{20}{c}}
{\rho_A \otimes \1}&R_{AB}^{T_B}\\
R_{AB}^{T_B}&{\rho_A \otimes \1}
\end{array}} \right)\\
=&\frac{1}{2}\left( {\begin{array}{*{20}{c}}
{\rho_A \otimes \1}+R_{AB}^{T_B}&{\rho_A \otimes \1}+R_{AB}^{T_B}\\
{\rho_A \otimes \1}+R_{AB}^{T_B}&{\rho_A \otimes \1}+R_{AB}^{T_B}
\end{array}} \right)\\
+&\frac{1}{2}\left( {\begin{array}{*{20}{c}}
{\rho_A \otimes \1}-R_{AB}^{T_B}&-({\rho_A \otimes \1}-R_{AB}^{T_B})\\
-({\rho_A \otimes \1}-R_{AB}^{T_B})&{\rho_A \otimes \1}-R_{AB}^{T_B}
\end{array}} \right)\ge 0.
\end{align*}

Therefore, $R_{AB}^{T_B}$ satisfies the constraint above, which means that
$ \| J_{\cN}^{T_B}\|_{cb}\ge \tr (J_{\cN}^{T_B} R_{AB}^{T_B})=\G(\cN)$.
We will further compare our semidefinite programming upper bound $Q_{\G}(\cN)$ to $Q_{\Theta}(\cN)$  in Fig. \ref{spec3}  based on $\cN_r=\sum_i E_i\cdot E_i^\dagger(0\le r\le 0.5)$ with $E_0=\proj 0+\sqrt r\proj 1$ and $E_1=\sqrt{1-r}\ketbra 0 1+\ketbra 1 2$.
\end{proof}

 \begin{figure}[ht]
 \center
 \includegraphics[width=5.8cm]{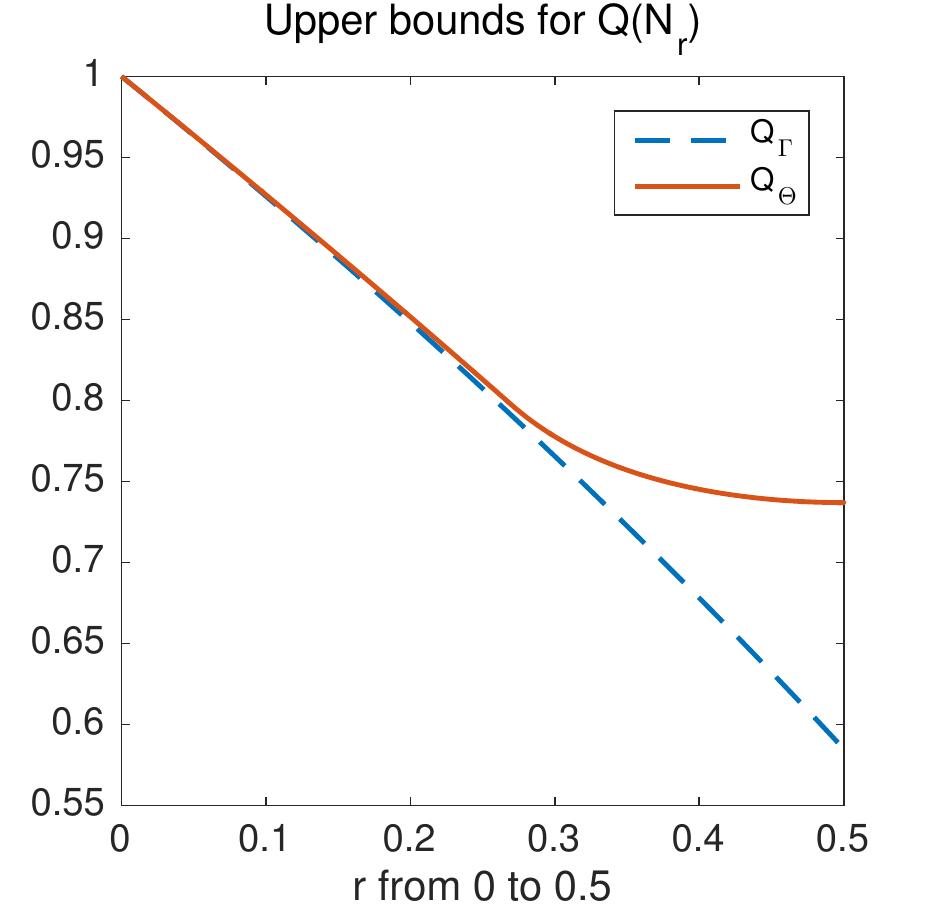}
 \caption{This plot shows different upper bounds of  $Q(\cN_r)$.  Dashed line depicts the upper bound $Q_\G(\cN_r)$ while solid line depicts $Q_\Theta(\cN_r)$}\label{spec3} 
\end{figure}

Comparing with the upper bound $Q_{AD}$ induced by $\e$-degradable quantum channels \cite{Sutter2014}, $Q_\G$ is tighter when $\e$ is not small. For example, for the class of channel $\cN_r$, when $r<0.38$, $Q_\G<\e\log 2+(1+\frac{1}{2}\e)h(\frac{\e}{2+\e})\le Q_{AD}$. 



\section{Conclusions}
We prove that the NS/PPTp/NS$\cap$PPTp codes assisted  zero-error quantum capacity depends only on the non-commutative bipartite graph of the channel and the NS codes assisted zero-error quantum capacity is given by the square root of the QSNC assisted zero-error classical capacity. We then introduce the activated PPTp codes assisted zero-error quantum capacity. Furthermore, we present a general SDP upper bound $Q_\Gamma$ of  quantum capacity, which can be used to evaluate the quantum capacity of an arbitrary channel efficiently.  $Q_\Gamma$ is always smaller than or equal to $Q_\Theta$ and can be strictly smaller than $Q_\Theta$ and $Q_{AD}$ for some channels. This upper bound is also additive and thus becomes an upper bound of the potential PPTp codes assisted capacity.  $Q_\Gamma$ can also be used to bound the super-activation of quantum capacity.

One interesting open problem is to determine the asymptotic PPTp codes assisted zero-error quantum  capacity $Q_{0}^{PPTp}(K)$.  Also, it would be very interesting to combine the upper bound $Q_\G$ with some entropy bounds such as the $Q_{ss}$ in \cite{Smith2008a}.

\section*{Acknowledgments}
We were grateful to M. Tomamichel for helpful suggestions.  This work was partly supported by the Australian Research Council (Grant No. DP120103776 and No. FT120100449) and the National Natural Science Foundation of China (Grant No. 61179030).



\bibliographystyle{IEEEtran}
 \bibliography{Q_upper_bound}

\begin{thebibliography}{10}
\providecommand{\url}[1]{#1}
\csname url@samestyle\endcsname
\providecommand{\newblock}{\relax}
\providecommand{\bibinfo}[2]{#2}
\providecommand{\BIBentrySTDinterwordspacing}{\spaceskip=0pt\relax}
\providecommand{\BIBentryALTinterwordstretchfactor}{4}
\providecommand{\BIBentryALTinterwordspacing}{\spaceskip=\fontdimen2\font plus
\BIBentryALTinterwordstretchfactor\fontdimen3\font minus
  \fontdimen4\font\relax}
\providecommand{\BIBforeignlanguage}[2]{{%
\expandafter\ifx\csname l@#1\endcsname\relax
\typeout{** WARNING: IEEEtran.bst: No hyphenation pattern has been}%
\typeout{** loaded for the language `#1'. Using the pattern for}%
\typeout{** the default language instead.}%
\else
\language=\csname l@#1\endcsname
\fi
#2}}
\providecommand{\BIBdecl}{\relax}
\BIBdecl

\bibitem{Cubitt2015}
T.~Cubitt, D.~Elkouss, W.~Matthews, M.~Ozols, D.~P{\'{e}}rez-Garc{\'{i}}a, and
  S.~Strelchuk, ``{Unbounded number of channel uses may be required to detect
  quantum capacity},'' \emph{Nature communications}, vol.~6, 2015.

\bibitem{Bennett2006}
C.~H. Bennett, I.~Devetak, P.~W. Shor, and J.~A. Smolin, ``{Inequalities and
  separations among assisted capacities of quantum channels},'' \emph{Physical
  Review Letters}, vol.~96, no.~15, p. 150502, 2006.

\bibitem{Leung2015c}
D.~Leung and W.~Matthews, ``{On the Power of PPT-Preserving and Non-Signalling
  Codes},'' pp. 4486--4499, 2015.

\bibitem{Holevo2001}
A.~S. Holevo and R.~F. Werner, ``{Evaluating capacities of bosonic Gaussian
  channels},'' \emph{Physical Review A}, vol.~63, no.~3, p. 32312, 2001.

\bibitem{Smith2008a}
G.~Smith, J.~Smolin, and A.~Winter, ``{The quantum capacity with symmetric side
  channels},'' \emph{IEEE Transactions on Information Theory}, vol.~54, no.~9,
  pp. 4208--4217, 2008.

\bibitem{Sutter2014}
D.~Sutter, V.~B. Scholz, and R.~Renner, ``{Approximate Degradable Quantum
  Channels},'' \emph{arXiv:1412.0980}, 2014.

\bibitem{Gao2015a}
L.~Gao, M.~Junge, and N.~LaRacuente, ``{Capacity Bounds via Operator Space
  Methods},'' \emph{arXiv:1509.07294}, 2015.

\bibitem{Bruß1998}
D.~Bru{\ss}, D.~P. DiVincenzo, A.~Ekert, C.~A. Fuchs, C.~Macchiavello, and
  J.~A. Smolin, ``{Optimal universal and state-dependent quantum cloning},''
  \emph{Physical Review A}, vol.~57, no.~4, p. 2368, 1998.

\bibitem{Cerf2000}
N.~J. Cerf, ``{Pauli cloning of a quantum bit},'' \emph{Physical Review
  Letters}, vol.~84, no.~19, p. 4497, 2000.

\bibitem{Wolf2007}
M.~M. Wolf and D.~Perez-Garcia, ``{Quantum capacities of channels with small
  environment},'' \emph{Physical Review A}, vol.~75, no.~1, p. 12303, 2007.

\bibitem{Smith2008b}
G.~Smith and J.~Smolin, ``{Additive extensions of a quantum channel},'' 
  \emph{Proceedings of IEEE Information Theory Workshop (ITW)}\hskip 1em plus
  0.5em minus 0.4em\relax, pp. 368--372, 2008.

\bibitem{Tomamichel2015a}
M.~Tomamichel, M.~M. Wilde, and A.~Winter, ``{Strong converse rates for quantum
  communication},''  \emph{Proceedings of 2015 IEEE International Symposium
  on Information Theory (ISIT),}\hskip 1em plus 0.5em minus 0.4em\relax
 pp. 2386--2390,  2015.

\bibitem{Lloyd1997}
S.~Lloyd, ``{Capacity of the noisy quantum channel},'' \emph{Physical Review
  A}, vol.~55, no.~3, p. 1613, 1997.

\bibitem{Shor2002a}
P.~W. Shor, ``{The quantum channel capacity and coherent information},'' in
  \emph{lecture notes, MSRI Workshop on Quantum Computation}, 2002.

\bibitem{Devetak2005a}
I.~Devetak, ``{The private classical capacity and quantum capacity of a quantum
  channel},'' \emph{IEEE Transactions on Information Theory}, vol.~51, no.~1,
  pp. 44--55, 2005.

\bibitem{Schumacher1996a}
B.~Schumacher and M.~A. Nielsen, ``{Quantum data processing and error
  correction},'' \emph{Physical Review A}, vol.~54, no.~4, p. 2629, 1996.

\bibitem{Barnum2000}
H.~Barnum, E.~Knill, and M.~A. Nielsen, ``{On quantum fidelities and channel
  capacities},'' \emph{IEEE Transactions on Information Theory}, vol.~46,
  no.~4, pp. 1317--1329, 2000.

\bibitem{Barnum1998}
H.~Barnum, M.~A. Nielsen, and B.~Schumacher, ``{Information transmission
  through a noisy quantum channel},'' \emph{Physical Review A}, vol.~57, no.~6,
  p. 4153, 1998.

\bibitem{Rains2001}
E.~M. Rains, ``{A semidefinite program for distillable entanglement},''
  \emph{IEEE Transactions on Information Theory}, vol.~47, no.~7, p. 2921,
  2001.

\bibitem{Cubitt2011}
T.~S. Cubitt, D.~Leung, W.~Matthews, and A.~Winter, ``{Zero-Error Channel
  Capacity and Simulation Assisted by Non-Local Correlations},'' \emph{IEEE
  Transactions on Information Theory}, vol.~57, no.~8, pp. 5509--5523, 2011.

\bibitem{Duan2016}
R.~Duan and A.~Winter, ``{No-Signalling-Assisted Zero-Error Capacity of Quantum
  Channels and an Information Theoretic Interpretation of the Lov{\'{a}}sz
  Number},'' \emph{IEEE Transactions on Information Theory}, vol.~62, no.~2,
  pp. 891--914, 2016.

\bibitem{Duan2015a}
R.~Duan and X.~Wang, ``{Activated zero-error classical capacity of quantum
  channels in the presence of quantum no-signalling correlations},''
  \emph{arXiv:1510.05437}, 2015.

\bibitem{Duan2013a}
R.~Duan, S.~Severini, and A.~Winter, ``{Zero-error communication via quantum
  channels, noncommutative graphs, and a quantum Lov{\'{a}}sz number},''
  \emph{IEEE Transactions on Information Theory}, vol.~59, no.~2, pp.
  1164--1174, 2013.

\bibitem{Duan2015}
------, ``{On zero-error communication via quantum channels in the presence of
  noiseless feedback},'' \emph{arXiv:1502.02987}, 2015.

\bibitem{Yang2015}
D.~Yang and A.~Winter, ``{Potential capacities of quantum channels},''
  \emph{IEEE Transactions on Information Theory}, vol.~62, no.~3, pp.
  1415--1424, 2016.

\bibitem{Smith2008}
G.~Smith and J.~Yard, ``{Quantum communication with zero-capacity channels},''
  \emph{Science}, vol. 321, no. 5897, pp. 1812--1815, 2008.

\bibitem{Acin2015a}
A.~Ac{\'{i}}n, R.~Duan, A.~B. Sainz, and A.~Winter, ``{A new property of the
  Lov{\'{a}}sz number and duality relations between graph parameters},''
  \emph{arXiv:1505.01265}, 2015.

\bibitem{Vandenberghe1996}
L.~Vandenberghe and S.~Boyd, ``{Semidefinite programming},'' \emph{SIAM
  review}, vol.~38, no.~1, pp. 49--95, 1996.

\bibitem{Khachiyan1980}
L.~G. Khachiyan, ``{Polynomial algorithms in linear programming},'' \emph{USSR
  Computational Mathematics and Mathematical Physics}, vol.~20, no.~1, pp.
  53--72, 1980.

\bibitem{Grant2008}
\BIBentryALTinterwordspacing
M.~Grant and S.~Boyd, ``{CVX: Matlab software for disciplined convex
  programming},'' 2008. [Online]. Available: \url{http://cvxr.com/cvx}
\BIBentrySTDinterwordspacing

\bibitem{Wang2016}
X.~Wang and R.~Duan, ``{An Improved Semidefinite Programming Upper Bound on
  Distillable Entanglement},'' \emph{arXiv:1601.07940}, 2016.

\bibitem{Watrous2012}
J.~Watrous, ``{Simpler semidefinite programs for completely bounded norms},''
  \emph{arXiv:1207.5726}, 2012.

\end{thebibliography}

\end{document}